\DeclareMathOperator*{\Argmin}{Argmin}
\newtheorem{remarkEnv}{Remark}
\newtheorem{proposition}{Proposition}
\newenvironment{remark}[1][]{\begin{remarkEnv}}{\hfill$\blacklozenge$\end{remarkEnv}}
\title{\LARGE \bf
Vanishing Stacked-Residual PINN for State Reconstruction of Hyperbolic Systems
}
\author{Katayoun Eshkofti and Matthieu Barreau
\thanks{This work is partially supported by the Wallenberg AI, Autonomous Systems and Software Program (WASP) funded by the Knut and Alice Wallenberg Foundation and Digital Futures.}
\thanks{K. Eshkofti and M. Barreau are with the Division of Decision and Control Systems, Digital Futures, KTH Royal Institute of Technology, SE-100 44 Stockholm, Sweden {\tt\small \{eshkofti,barreau\}@kth.se}}%
}
\begin{document}

\maketitle
\thispagestyle{empty}
\pagestyle{empty}

\begin{abstract}
In a more connected world, modeling multi-agent systems with hyperbolic partial differential equations (PDEs) offers a compact, physics-consistent description of collective dynamics. However, classical control tools need adaptation for these complex systems. Physics-informed neural networks (PINNs) provide a powerful framework to fix this issue by inferring solutions to PDEs by embedding governing equations into the neural network. A major limitation of original PINNs is their inability to capture steep gradients and discontinuities in hyperbolic PDEs. To tackle this problem, we propose a stacked residual PINN method enhanced with a vanishing viscosity mechanism. Initially, a basic PINN with a small viscosity coefficient provides a stable, low-fidelity solution. Residual correction blocks with learnable scaling parameters then iteratively refine this solution, progressively decreasing the viscosity coefficient to transition from parabolic to hyperbolic PDEs. Applying this method to traffic state reconstruction improved results by an order of magnitude in relative $\mathcal{L}^2$ error, demonstrating its potential to accurately estimate solutions where original PINNs struggle with instability and low fidelity.
\end{abstract}

\section{INTRODUCTION}

Quasi-linear hyperbolic partial differential equations (PDEs) are crucial in modern control problems, emerging in a wide range of applications, from fluid dynamics to electrical energy transportation and traffic flow \cite{BastinCoron2016}. A notable example is the role of the Hamilton-Jacobi equation in optimal and predictive control, highlighting the ubiquity of hyperbolic PDEs across both physical modeling and abstract optimization applications in control theory. State reconstruction and identification of systems governed by hyperbolic PDEs is of fundamental interest, as it allows for estimating the complete evolution of the system from partial and noisy observations. This capability is key for monitoring distributed systems and enables subsequent control.

However, model identification and state reconstruction for quasi-linear hyperbolic PDEs are challenging due to their nonlinear dynamics, discontinuities, non-uniqueness, and infinite-dimensional nature. Traditional model-based approaches typically require precise knowledge of the system model, low dimensionality, and favorable theoretical conditions to guarantee convergence, which are often difficult to ensure for complex systems. On the other hand, commonly used machine learning methods \cite{POLSON_ML} require a large number of measurements, which often results in overfitting \cite{HuangLimitations}.

These issues encourage investigating learning-based approaches that handle model uncertainties and efficiently utilize data. To address this, the authors of \cite{raissi2019physics} introduced physics-informed neural networks (PINNs), integrating governing equations directly within neural networks. By embedding physical models into the loss function and penalizing deviations, PINNs effectively learn solutions from sparse and noisy data. Specifically, \cite{barreua_identification} demonstrated that PINNs simultaneously identify unknown model parameters, reconstruct traffic flow states from sparse vehicle data, and extend predictions. This reveals the potential of PINN-based approaches for state reconstruction in hyperbolic PDEs and serves as an incentive for the present research.

Since the original PINN development, recent modifications have improved performance in complex scenarios. For example, \cite{AmandaStacked} proposed a multi-fidelity stacking approach, iteratively training a PINN to refine outputs progressively. Physics-informed residual adaptive networks employ projected input coordinates within residual blocks featuring adaptive skip connections to address deep multilayer perceptron derivative initialization issues \cite{piratenet}. Integrating localized artificial viscosity into PINNs enables automatic learning of optimal viscosity, enhancing accuracy over non-adaptive methods \cite{COUTINHOadaptivevisco}. Additionally, sequential and hierarchical PINN structures, such as multi-stage neural networks, have shown unique capabilities \cite{MSNN}.

Although recent studies suggest PINNs outperform traditional deep learning methods \cite{Huang2020}, they have limitations, particularly for hyperbolic PDEs, where they struggle to capture sharp features like shocks or discontinuities \cite{HuangLimitations}. Addressing these challenges motivates the development of more effective state estimation methods for hyperbolic PDEs. Moreover, no evidence demonstrates that PINN effectively learns the hyperbolic PDE underlying traffic state models with acceptable accuracy \cite{HuangLimitations}, making this an open and intriguing research area in traffic control.

Our contribution is to address this issue by incorporating prior knowledge in the form of a function series through an effective combination of the vanishing viscosity method from applied mathematics, curriculum learning from machine learning, and stacked PINNs. This approach ensures convergence to the unique entropic hyperbolic solution.

This paper consists of five sections. Section 2 reviews hyperbolic PDE formulations and the role of vanishing viscosity in ensuring stable, unique solutions. Section 3 introduces the vanishing stacked-residual PINN methodology, outlining its architecture and training with decreasing viscosity. Section 4 presents numerical experiments on traffic state reconstruction, highlighting shock-capturing capabilities and comparing performance with the original PINN. Section 5 concludes with key findings and future research directions.

\textbf{Notation:} Let $\mathbb{R}$ denote real numbers and $\mathbb{R}^{+}$ nonnegative real numbers. For differentiable single-variable functions, the prime notation $f'$ represents derivatives. Multivariate functions' partial derivatives with respect to space and time are denoted by $\partial f$ with corresponding subscripts. Moreover, $C^{1}(\mathbb{R}^{+} \times \mathbb{R})$ denotes continuously differentiable functions on the given domain. The sets $L^p(\Lambda)$, $H^k(\Lambda)$, and $L^{\infty}(\Lambda)$ represent Lebesgue, Sobolev, and essentially bounded function spaces on $\Lambda$, respectively. 

\section{Background on problem statement}

The main goal of this paper is to find a solution $u$ to the following 1d quasi-linear hyperbolic PDE posed on the spatiotemporal domain $\Lambda = [0,T] \times [0,L] \subset \mathbb{R}^+ \times \mathbb{R}$:
\begin{equation} \label{eq:system}
\begin{cases}
    \partial_t u + \partial_x f(u) = 0, & (t,x) \in \Lambda, \\
    u(0,x) = u^0(x), & x \in [0,L], \\
    u(t,0) = u_b^{-}(t), \ u(t,L) = u_b^{+}(t), & t \in [0,T].
\end{cases}
\end{equation}
Here, $u^0$ is a function of bounded variation that represents the initial data \cite[Definition~1.7.1]{Dafermos:1315649}. Suitable boundary conditions $u_b^{\pm}$ should be considered for the PDE in \eqref{eq:system} to be well-posed \cite{BastinCoron2016}. It is assumed that $f$ is a smooth flux function, at least $f \in C^2(\mathbb{R})$, so that the PDE is strictly hyperbolic. Moreover, to ensure that the mapping $u \mapsto f(u)$ remains bounded, it is assumed that $f$ is globally Lipschitz or has at most polynomial growth. 

Since hyperbolic solutions can develop discontinuities in finite time, we can only investigate weak solutions, which belong to a distribution space \cite{BastinCoron2016}.
However, these weak solutions are usually not unique unless an entropy condition is imposed. 
To this end, one can consider the Lax-E entropy condition, which adds a point-wise constraint in the form of $f'(u(t,x^-)) \leq f'(u(t,x^{+}))$ for $t,x \in \Lambda$. A weak solution that satisfies the entropy inequality is then called entropy-admissible. As stated in \cite[Theorem~14.10.2]{Dafermos:1315649}, there is a unique entropic solution to \eqref{eq:system}.

Testing the Lax-E inequality at each point is numerically intractable. A more robust and practical way to construct weak entropy solutions is through the vanishing viscosity method. Consider the parabolic regularization of \eqref{eq:system} as
\begin{equation}\label{eq:parabolic}
    \partial_t u_{\gamma} + \partial_x f(u_{\gamma}) = \gamma \partial_{xx} u_{\gamma}, \quad \gamma > 0.
\end{equation}
where $\gamma$ is a small viscosity coefficient. It is worth noting that $u_{\gamma}$ signifies the solution corresponding to viscosity $\gamma$. Under the standard assumptions previously mentioned, the parabolic PDE in \eqref{eq:parabolic} results in a unique classical solution $u_{\gamma}$ for each fixed $\gamma$ \cite[Theorem~14.6]{amann1993nonhomogeneous}. By finding estimates that are uniform with respect to $\gamma$, it is proved that the sequence $\{u_{\gamma}\}$ remains bounded in appropriate norms \cite{dafermos2016hyperbolic}. 
Uniform energy estimates and the derivation of an entropy inequality enable passing to the limit as 
$\gamma \to 0$, thereby, as discussed in Chapter 2 of \cite{dafermos2016hyperbolic}, obtaining a weak solution that is entropy-admissible and satisfies the additional regularity conditions outlined as $ u \in C^0\left([0,T]; H^2([0,L])\right) \cap C^1\left([0,T]; H^1([0,L])\right)$.

The rigorous justification of this limit passage and the well-posedness of the corresponding problem is established via a combination of compactness arguments and the construction of a basic quadratic Lyapunov function, as detailed in \cite{BastinCoron2016}. Moreover, the dissipativity of the boundary conditions is crucial to ensuring that the energy associated with the system decays over time. Consequently, by combining the vanishing viscosity method \cite{Kružkov_1970} with uniform energy and entropy estimates, the existence, uniqueness, and entropy admissibility of the solution to \eqref{eq:system} are guaranteed.

Although vanilla PINNs have been successfully applied to various types of problems, particularly those classified as parabolic PDEs, a major limitation is their poor performance in terms of convergence and accuracy when solving hyperbolic PDEs \cite{HuangLimitations}. 
Therefore, a modification to the PINN structure is necessary to enhance its ability to solve hyperbolic PDEs, leading to the following problem.

\textbf{Problem:} We want to efficiently and accurately approximate the entropic solution $u$ of a hyperbolic PDE specified in \eqref{eq:system} using PINNs. 

To achieve this objective, we introduce a novel variant of PINN called vanishing stacked-residual PINN, which incrementally refines a baseline approximation through stacked residual-correction subnetworks with vanishing viscosity. This approach incorporates both the PDE residual and data to enforce the governing equations while also guiding the solution in regions where the PDE may be insufficient or highly nonlinear.

In the following section, we demonstrate how adopting this smooth-to-sharp transition in the stacked residual PINN enhances the capture of discontinuities, enabling high-resolution reconstruction of hyperbolic PDE states.

\section{Methodology}

This section first presents an overview of the proposed architecture, followed by a detailed discussion of its application to hyperbolic PDEs.

\subsection{Classical PINN}

In the standard formulation of PINNs, the goal is to approximate the solution $u$ of PDE  \eqref{eq:system} using a dense feedforward neural network $\hat{u}(\cdot; \boldsymbol{\theta})$. 
Let the neural network have $L$ hidden layers, formulated as follows:
\begin{equation}
    \begin{array}{rl}
        \hat{u}(t,x; \boldsymbol{\theta}) \!\!\!\!&= W_L \times H_{L-1} \circ \cdots \circ H_1(t,x) + b_L \\
        &\triangleq \mathcal{N}([t, x]; \boldsymbol{\theta}),
    \end{array}
\end{equation}
for $(t,x) \in \Lambda$ where $(t, x)$ represents the input coordinates in the spatiotemporal domain. For each hidden layer $l = 1, \dots, L-1$, the feature map is defined as $H_l(\nu) = \phi.(W_l \nu + b_l)$.
The weights and biases at layer $l$ are represented by $W_l \in \mathbb{R}^{n_{l} \times n_{l-1}}$ and $b_l \in \mathbb{R}^{n_l}$. The entire set of network parameters is denoted $\boldsymbol{\theta} = \{ W_l, b_l \}_{l=1}^{L}$ and the element-wise activation function is $\phi \in C^{\infty}(\mathbb{R}, \mathbb{R})$.

We are interested in approximating the  unique entropic solution $u$ to \eqref{eq:system}, meaning that we want to solve the following constrained optimization problem:
\[
    \boldsymbol{\theta}^* = \begin{array}[t]{cl}
        \Argmin_{\boldsymbol{\theta}} & \displaystyle \int_{\Gamma} \| u(\nu) - \hat{u}(\nu) \|^2 d\nu\\
        \text{s. t.} & \displaystyle \int_{\Lambda} \left| r_0(\cdot; \hat{u}(\cdot; \boldsymbol{\theta})) \right|^2 = 0,
    \end{array}
\]
where $\Gamma = \Gamma_{init} \cup \Gamma_{boundary}$ is the measured boundary of System~\eqref{eq:system} with
\[
    \begin{array}{l}
        \Gamma_{init} = \left\{ (0, x) \ | \ x \in [0, L] \right\}, \\
        \Gamma_{boundary} = \left\{ (t, 0) \ | \ t \in [0, T] \right\} \cup \left\{ (t, L) \ | \ t \in [0, T] \right\},
    \end{array}
\]
and $\hat{u}$ refers to $\hat{u}(\cdot, \boldsymbol{\theta})$ to ease the reading. The residual $r$ is also defined as:
\begin{equation}
    \label{eq:residual}
    r_{\gamma}(\cdot; \hat{u}) = \partial_t \hat{u} + \partial_x f(\hat{u}) - \gamma \partial_{xx} \hat{u}.
\end{equation}

The previous problem cannot be numerically solved because it contains integrals. Following the methodology in \cite{barreau_accuracy} using Monte-Carlo sampling and the Lagrangian formulation, we get the following relaxed but numerically tractable optimization problem:
\[
    \boldsymbol{\theta}^* = \Argmin_{\boldsymbol{\theta}} \ \max_{\lambda > 0} \ \mathcal{L}_{\lambda}(\hat{u}, 0),
\]
where $\mathcal{L}_{\lambda}(\hat{u}, \gamma) = \mathcal{L}_{data}(\hat{u}(\cdot, \theta)) + \lambda \mathcal{L}_{phy}(\hat{u}(\cdot, \theta), \gamma)$.
\[
    \mathcal{L}_{data}(\hat{u}) = \frac{1}{|\mathcal{D}_{data}|} 
    \sum_{(t,x) \in \mathcal{D}_{data}} \left| u(t, x) - \hat{u}(t, x) \right|^2,
\]
\[
    \mathcal{L}_{phy}(\hat{u}, \gamma) = \frac{1}{|\mathcal{D}_{phy}|} \sum_{(t,x) \in \mathcal{D}_{phy}} \left| r_{\gamma}(t, x; \hat{u}) \right|^2,
\]
with $\mathcal{D}_{phy} \subset \Lambda$ a discrete set of cardinal $|\mathcal{D}_{phy}|$ and $\mathcal{D}_{data} \subset \Gamma$ a discrete set of cardinal $|\mathcal{D}_{data}|$. This problem is typically solved using a primal-dual strategy, alternating between gradient descent on the $\min$ problem over the tensor of parameters $\theta$ and the $\max$ over the Lagrange multiplier $\lambda$. A simpler and often efficient method is to fix $\lambda$ to a small constant and apply gradient descent only to the $\min$ problem. For simplicity, we consider this option in this paper.

\begin{figure}[!t]
    \centering
    \vspace{-1mm}    \includegraphics[width=0.2\textwidth]{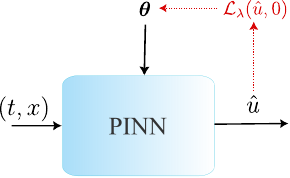}
    \caption{Vanilla PINN structure. Black lines are forward pass, and red-dotted lines are backpropagation.}
    \label{fig:classic}
\end{figure}

The total loss function consists then of two terms: the data mismatch $\mathcal{L}_{data}$ and the PDE residual $\mathcal{L}_{phy}$. These losses are computed using the mean square error between the true and the approximated solution, transforming the optimization problem into a learning one. For the physics loss, we want to minimize the PDE residual $r_{\gamma}(\cdot; \boldsymbol{\theta})$ over $\Lambda$, bringing regularization and forcing the convergence to the solution of the PDE. The general structure of PINN is illustrated in Fig.~\ref{fig:classic}. 

As explained in \cite{barreau_accuracy, sirignano2018dgm}, the PINN methodology only succeeds in the case of a Lipschitz continuous PDE operator, which is never the case in hyperbolic problems such as \eqref{eq:system}. The following section provides a detailed discussion of a refined architecture that proposes a solution.

\subsection{Residual PINN}

First, consider solving the regularized PDE in \eqref{eq:parabolic}. If $\gamma$ is small enough, then the approximated solution will be close to the entropic solution of \eqref{eq:system}, based on the vanishing viscosity method \cite{Kružkov_1970}. To this end, the initial stage consists of a baseline PINN designed to approximate the solution $u_{\gamma_{\text{init}}}$ to the parabolic PDE \eqref{eq:parabolic} by $\hat{u}^{(0)}(\cdot; \boldsymbol{\theta}_0)$. At this stage, the viscosity coefficient $\gamma_{\text{init}}$ is chosen to be large enough to ensure that $\hat{u}$ is entropic and that the PDE operator is sufficiently Lipschitz to guarantee that the PINN can successfully learn the approximated solution. 

One key idea of this article is to combine the vanishing methodology with the residual-network architecture \cite{he2016deep}. Subsequently, we introduce a residual stage where $\hat{u}^{(0)}$ is refined by a single residual correction network to get $\hat{u}^{(1)}$, as shown in Fig.~\ref{fig:residual}. This residual-correction PINN bears a close resemblance to a Luenberger observer that feedbacks the difference $u - \hat{u}$ to update its estimate. In both cases, the discrepancy triggers the correction.
At this second stage, the viscosity coefficient is $\gamma_1 = 0$, hence the solution to the hyperbolic PDE in \eqref{eq:system} is approximated by $\hat{u}^{(1)}$. Consequently, the approximation at the second stage is: 
\begin{equation}
    \hat{u}^{(1)}(t, x; \boldsymbol{\theta}_1) = \hat{u}^{(0)}(t, x; \boldsymbol{\theta}_{0})  + \lvert \alpha_1 \rvert\mathcal{N} \big( [t, x, \hat{u}^{(0)}(t,x)]; \boldsymbol{\theta}_1 \big).
\label{eq:rescorrection}
\end{equation}
In practice, the residual block is a feedforward neural network whose inputs include the spatiotemporal coordinates $(t, x)$ and the approximation at the previous step $\hat{u}^{(0)}(t,x)$. The scaling factor $\lvert \alpha_1 \rvert$ is a learnable variable that determines and controls the extent to which each residual block contributes to the update. The coefficient $\alpha_1$ is introduced because we want to keep the correction small, such that we only correct slightly the parabolic solution, ensuring that the hyperbolic solution remains entropic. Indeed, a large $\alpha_1$ will probably signify that $\hat{u}^{(1)}$ is quite different from $\hat{u}^{(0)}$ which does not align with the vanishing viscosity method. The new optimization problem is
\[
    \hspace*{-0.2cm}\boldsymbol{\theta}_0^*, \boldsymbol{\theta}_1^*  = \Argmin_{\boldsymbol{\theta}_0,\boldsymbol{\theta}_1,\alpha_1} \ \mathcal{L}(\boldsymbol{\theta}_0, \boldsymbol{\theta}_1) + \alpha_1^2,
\]
where
\begin{equation*}
    \mathcal{L}(\boldsymbol{\theta}_0, \boldsymbol{\theta}_1) = \frac{1}{2} \left( \mathcal{L}_{\lambda}(\hat{u}^{(0)}, \gamma_{\text{init}}) + \mathcal{L}_{\lambda}(\hat{u}^{(1)}, 0) \right).
\end{equation*}

\begin{figure}[!t]
    \centering
    \vspace{-1mm}    \includegraphics[width=0.33\textwidth]{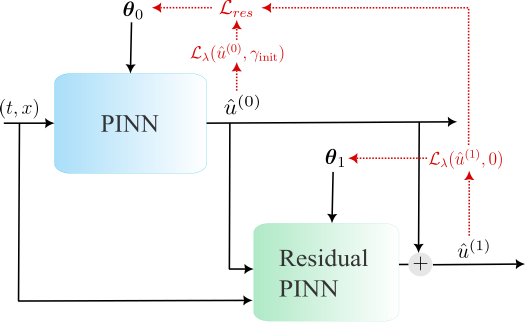}
    \caption{Residual PINN structure. Black lines are forward pass, and red-dotted lines are backpropagation.}
    \label{fig:residual}
\end{figure}

As discussed in Section 2, the vanishing viscosity principle ensures that the solution to the parabolic PDE in \eqref{eq:parabolic}, with a small viscosity term, converges to the entropic weak solution to the corresponding hyperbolic PDE as the viscosity coefficient decreases to zero. 

As a result, training a PINN on the parabolic form in the initial stage, which admits a unique smooth solution, and then transitioning to the hyperbolic PDE allows for acquiring both the stability of the parabolic regime and the fidelity of the hyperbolic limit.

\subsection{Stacked residual PINN and vanishing viscosity approach}

While a single residual correction block may succeed in simple problems, it can fail when dealing with sharp and localized solution features. Moreover, relying on a single residual PINN to approximate both smooth regions and shocks imposes a burden on the learning process. 

The other key idea of this article is to adapt the iterative stacking method put forward by \cite{AmandaStacked}. In our study, by stacking multiple residual blocks, each trained with a successively smaller $\gamma$, a multi-stage correction process is implemented following the vanishing viscosity method. In fact, each stage leverages the well-posedness of the parabolic PDE in \eqref{eq:parabolic} to compute a smooth and stable correction, which is then gradually refined to capture the sharper features of the hyperbolic PDEs. In other words, $\hat{u}^{(0)}$ is incrementally refined by stacking $n$ residual-correction subnetworks. The approximation 
at the $i$th stage is given by:
\begin{multline} \label{eq:stackedrescorrection}
    \hat{u}^{(i)}(t, x; \boldsymbol{\theta}_i) = 
    \hat{u}^{(i-1)}(t, x; \boldsymbol{\theta}_{i-1}) \\
    + \lvert \alpha_i \rvert \mathcal{N} \big( [t, x, \hat{u}^{(i-1)}(t,x)]; \boldsymbol{\theta}_i \big),
\end{multline}
where $i \in \{1, ..., n\}$, and $\lvert \alpha_i \rvert$ is an adaptive parameter that scales the contribution of each residual block. As in the viscous Burgers equation, the Cole–Hopf transformation \cite{cole1951quasi,hopf1950partial} demonstrates that solutions depend exponentially on the viscosity parameter. As $\gamma$ decreases, the transformed solution exhibits rapid exponential changes. Therefore, integrating a superlinear $\gamma$, as 
\begin{equation} \label{eq:vanishingdiffuison}
    \gamma_i = \gamma_{\text{init}} \left[ 1 - \left( \frac{i}{n} \right)^p \right]
\end{equation}
with $p > 1$ so that $\gamma_0 = \gamma_{\text{init}} > 0$ and $\gamma_n = 0$, ensures that the stacked residual blocks smoothly capture the transition from parabolic to hyperbolic behavior without compromising stability. 

As in stage 1, a physics-informed loss is imposed at each level $i$, ensuring that $r_{\gamma_i}(\cdot; \boldsymbol{\theta}_i)$ remains small along with the usual boundary and initial conditions.

In other words, at stacked residual block $i$, the PDE residual is governed by $r_{\gamma_i}$ such that the first block gets the viscous coefficient $\gamma_{\text{init}}$ and the last stage $\gamma_n = 0$. The optimization problem can be formulated as:
\[
    \left\{ \boldsymbol{\theta}_i^*,\alpha_i^* \right\}_{i=0}^{n} = 
    \Argmin_{\{\boldsymbol{\theta}_i, \alpha_i\}_{i=0}^{n}} \ \mathcal{L} \left( \left\{ \boldsymbol{\theta}_i \right\}_{i=0}^{n} \right) 
    + \sum_{i=1}^{n} \alpha_i^2
\]
where
\begin{equation*}
    \mathcal{L} \left( \left\{ \boldsymbol{\theta}_i \right\}_{i=0}^{n} \right) =
    \frac{1}{n+1} \sum_{i=0}^{n} \mathcal{L}_{\lambda}(\hat{u}^{(i)}, \gamma_i).
\end{equation*}
Here $\mathcal{L}$ represents the total loss function, which consists of the base PINN and the stacked residual PINNs. The residual PINN block diagram is presented in Fig.~\ref{fig:stacked}.

\begin{figure}[!t]
    \centering
    \vspace{-1mm}    \includegraphics[width=\columnwidth]{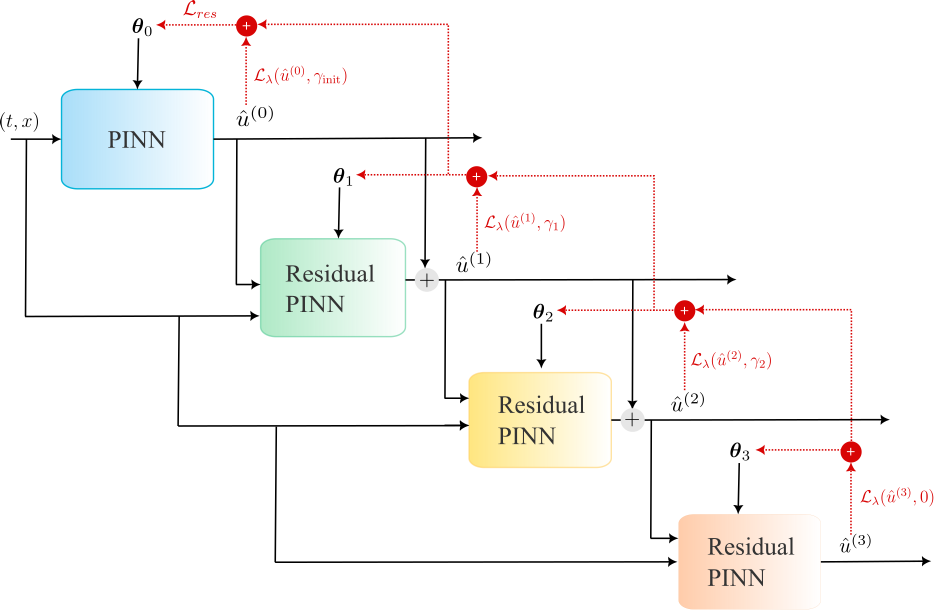}
    \caption{Vanishing stacked-residual PINN. Black lines are forward pass, and red-dotted lines are backpropagation.}
    \label{fig:stacked}
\end{figure}

\begin{remark}
The stacked PINN method was initially introduced in \cite{AmandaStacked}, where they employed a multi-fidelity strategy that stacks PINNs. In this approach, each network’s output provides a lower-fidelity input for the next stage, incrementally refining the model’s expressivity. We share the same idea; however, our method employs a residual PINN at each stage. Contrary to \cite{AmandaStacked}, we use a different residual $r_{\gamma}$ at each stage to align with the vanishing viscosity method.
\end{remark}

With this approach, we obtain certain convergence guarantees, as stated in the following proposition.
\begin{proposition}
Consider the hyperbolic PDE in \eqref{eq:system}. Let \( \gamma_{\text{init}} > 0 \) and consider a sequence of viscosity coefficients defined as in \eqref{eq:vanishingdiffuison}, ensuring \( \gamma_n = 0 \). Assume the following:

\begin{enumerate}
    \item For each \( \gamma_i > 0 \), the parabolic PDE in \eqref{eq:parabolic} admits a unique smooth solution \( u_{\gamma_i} \), converging strongly in \(L^1(\Lambda) \) to the entropy solution of \( u \) in \eqref{eq:system} as \( \gamma_i \to 0 \).

    \item At each stage \( i \), the neural network \( \hat{u}^{(i)} (\cdot,\boldsymbol{\theta}_i) \) has sufficient expressivity such that the approximation error is bounded by \( \varepsilon_i \), that is:
    \[
        \|\hat{u}^{(i)} - u_{\gamma_i}\|_{L^2(\Lambda)} \leq \varepsilon_i, \quad \text{where } \varepsilon_i \to 0 \text{ as } i \to n.
    \]
\end{enumerate}

Then, the stacked residual PINN solution \( \hat{u}^{(n)} \) converges to the entropy solution of \eqref{eq:system}. 
\end{proposition}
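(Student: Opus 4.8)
The plan is to establish the result through a triangle-inequality decomposition that separates the \emph{approximation error} of the network from the \emph{modeling error} of the vanishing-viscosity regularization, and then to pass to the limit in both contributions simultaneously. Letting $u$ denote the unique entropy solution of \eqref{eq:system}, I would introduce the viscous solution $u_{\gamma_n}$ associated with the terminal viscosity coefficient as an intermediate reference and write
\[
  \|\hat{u}^{(n)} - u\|_{L^1(\Lambda)} \le \|\hat{u}^{(n)} - u_{\gamma_n}\|_{L^1(\Lambda)} + \|u_{\gamma_n} - u\|_{L^1(\Lambda)}.
\]
The second term is controlled directly by the first assumption: since the schedule \eqref{eq:vanishingdiffuison} drives $\gamma_i \to 0$, the family $\{u_{\gamma_i}\}$ converges strongly in $L^1(\Lambda)$ to $u$, so this modeling contribution vanishes as $i \to n$ (and is in fact identically zero at the terminal stage, where $\gamma_n = 0$ makes \eqref{eq:parabolic} coincide with \eqref{eq:system}).

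The first term is the network approximation error, which the second assumption bounds in the $L^2$ norm rather than in $L^1$. The reconciliation step exploits the boundedness of the domain: since $|\Lambda| = TL < \infty$, Hölder's inequality gives $\|g\|_{L^1(\Lambda)} \le |\Lambda|^{1/2}\,\|g\|_{L^2(\Lambda)}$ for every $g \in L^2(\Lambda)$, whence $\|\hat{u}^{(i)} - u_{\gamma_i}\|_{L^1(\Lambda)} \le |\Lambda|^{1/2}\,\varepsilon_i$. Because $\varepsilon_i \to 0$ as $i \to n$, this term also vanishes, and combining the two bounds yields
\[
  \|\hat{u}^{(n)} - u\|_{L^1(\Lambda)} \le |\Lambda|^{1/2}\,\varepsilon_n + \|u_{\gamma_n} - u\|_{L^1(\Lambda)},
\]
whose right-hand side tends to zero as $i \to n$, which is the claimed convergence.

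The step I expect to be the most delicate is making the two limiting processes compatible, since the index $i$ simultaneously controls the viscosity $\gamma_i$ (through the first assumption) and the expressivity budget $\varepsilon_i$ (through the second). The clean way to handle this is to treat the bound above as a function of the single stage index and observe that the sum of two vanishing sequences vanishes; no rate-matching between $\gamma_i$ and $\varepsilon_i$ is needed for the qualitative statement. A secondary subtlety worth flagging is that the stacked construction \eqref{eq:stackedrescorrection} defines $\hat{u}^{(i)}$ recursively from $\hat{u}^{(i-1)}$, so a fully quantitative treatment would have to verify that the per-stage corrections $\lvert \alpha_i \rvert\,\mathcal{N}$ do not accumulate error across layers. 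Here, however, the second assumption is imposed directly on the terminal error $\|\hat{u}^{(i)} - u_{\gamma_i}\|$, which short-circuits any need for an explicit error-propagation analysis and reduces the proof to essentially the triangle-inequality argument above.
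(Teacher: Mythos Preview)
Your proposal is correct and follows essentially the same route as the paper: a triangle-inequality split into a network-approximation term (controlled via H\"older and Assumption~2) and a vanishing-viscosity term (controlled by Assumption~1). The paper's proof additionally wraps the final estimate in a sum $\sum_{i=0}^{n}\bigl(\sqrt{|\Lambda|}\,\varepsilon_i+\delta_i\bigr)$ over all stages, but your direct evaluation at the terminal index $i=n$ is already sufficient and arguably cleaner.
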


\begin{proof} 
According to the vanishing viscosity theorem \cite{Kružkov_1970}, for the sequence \( \{\gamma_i\} \) with \( \gamma_i \to 0 \) as \( i \to n \), we have $ \lim_{i \to n} \| u_{\gamma_i} - u \|_{{L}^1(\Lambda)} = 0$.

Considering the second assumption \cite{hornik1991approximation}, the residual correction mechanism ensures that the discrepancy between \( \hat{u}^{(i)} \) and the viscous solution \( u_{\gamma_{i}} \) is minimized through the physics loss \( \mathcal{L}_{\text{phy}} (\gamma_i, \hat{u}^{(i)}) \). The total error at stage \( i \) decomposes as:
\[
    \left\| \hat{u}^{(i)} - u \right\|_{{L}^1(\Lambda)} \leq 
    \underbrace{
    \sqrt{|\Lambda|}
    \left\| \hat{u}^{(i)} - u_{\gamma_i} \right\|_{{L}^2(\Lambda)}}_{\varepsilon_i} 
    + 
    \underbrace{\left\| u_{\gamma_i} - u \right\|_{{L}^1(\Lambda)}}_{\delta_i}.
\]
By the first assumption, \( \delta_i \to 0 \) as \( \gamma_i \to 0 \). By the second assumption, with a sufficiently large network, \( \varepsilon_i \to 0 \).
For the sequence \( \{ \hat{u}^{(i)} \} \), we have
\[
    \| \hat{u}^{(n)} - u \|_{{L}^1(\Lambda)} \leq \sqrt{|\Lambda|}\sum_{i=0}^{n} \varepsilon_i + \sum_{i=0}^{n} \delta_i.
\]
Since both series converge under assumptions 1 and 2, the proof is complete.
\end{proof}

\section{Results}

In this section, our goal is to estimate vehicle density over a road $[0,L]$ from local density measurements by employing the proposed vanishing stacked-residual PINN \footnote{The code and data are available at \url{https://github.com/KatayounEshkofti/VanishingStackedResidualPINN}}.

In the example considered, the PDE \eqref{eq:system} represents the LWR model \cite{RichardShock,lighthill1955kinematic}. The normalized density $u$ is defined such that \( u = 0 \) corresponds to an empty road, while \( u = 1 \) represents bumper-to-bumper traffic conditions. Two density measurements are provided at the boundaries \( x \in \{0, L\} \) and can be collected using loop detectors.

Additionally, \( f(u) = V_f u(1 - u) \) is referred to as the Greenshields flow equation \cite{greenshields1935study}, which describes the relationship between vehicle velocity and density. The parameter \( V_f \) denotes the free-flow velocity. The Greenshields function serves as the advection term, and its nonlinear dependence on \( u \) can lead to shock formation or discontinuities.

By employing the vanishing viscosity limit, the governing equation in \eqref{eq:parabolic} is considered, and a small initial viscosity coefficient \( \gamma_{\text{init}} = 0.1 \). To investigate the impact of stacked layers on the improvement of results, the algorithm is implemented while considering four different numbers of stacked residual blocks, $n = \{0, 1, 3, 5\}$, where $n = 0$ represents vanilla PINN. To compute the data loss $\mathcal{L}_{\text{data}}$ in the total loss function of PINN, density measurements are supplied via Godunov simulation of \eqref{eq:system}. 

The baseline PINN at the initial stage consists of three hidden layers, each with 30 neurons. Each residual correction block comprises three hidden layers, each with 40 neurons. Vanishing viscosity is implemented via the function defined in \eqref{eq:vanishingdiffuison}, where \( p = 2 \).

It is worth noting that, to ensure a fair comparison across all scenarios, the same number of hidden layers and neurons is utilized. Additionally, a $15{,}000$-iteration schedule is set for all scenarios, combined with an early stopping criterion using a patience parameter, so that training would automatically terminate once the loss plateaus. Under these conditions, the modified stacked PINN converged and stopped at iteration 11,000, while the other variants continued running through all 15,000 iterations.

The results are compared with solutions obtained from the Godunov simulation, and the relative \( \mathcal{L}^2 \) error is defined as the evaluation metric:
\begin{equation}
    \vspace*{-0.2cm}
    \text{relative } \mathcal{L}^2 = \frac{\|u - \hat{u}\|_2}{\|u\|_2}.
\label{eq:error}
\end{equation}
The relative \( \mathcal{L}^2 \) error quantifies the percentage discrepancy between the estimated density and exact values. As shown in Table~\ref{table:comparison}, increasing stacked residual blocks from 0 to 5 consistently reduces the error from about 19\% to 4\%. This suggests that additional residual blocks enhance estimation accuracy, though the smaller gain from 3 to 5 blocks may indicate diminishing returns.

\begin{table} 
    \centering
    \renewcommand{\arraystretch}{1.2}
    \begin{tabular}{c||c|c|c|c}
        \# Stacked residual blocks & 0 & 1 & 3 & 5 \\
        \hline
        Relative $\mathcal{L}^2$ error ($\times 10^{-2}$) & 18.98 & 13.86 & 4.86 & 4.66 \\
    \end{tabular}
    \caption{Relative $\mathcal{L}^2$ error for various numbers of stacked residual blocks.}
    \label{table:comparison}
\end{table}

\begin{figure}[!t]
    \centering
    \includegraphics[width=\columnwidth]{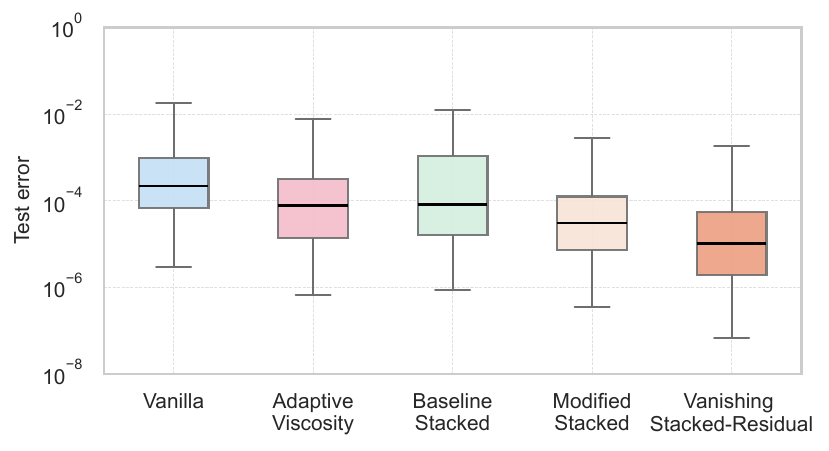}
    \caption{Distribution of errors over the test data for five methods applied to the traffic reconstruction problem.}
    \label{fig:boxplot}
\end{figure}

Moreover, we plot the distribution of point-wise errors across the entire spatio-temporal test domain in Fig.~\ref{fig:boxplot} to compare the performance of our method with the vanilla PINN \cite{raissi2019physics}, PINN with adaptive localized artificial viscosity \cite{COUTINHOadaptivevisco}, and the original stacked PINN \cite{AmandaStacked}. Note that the modified stacked PINN \cite{AmandaStacked} integrates vanishing viscosity into stacked networks. The baseline stacked PINN, the modified stacked PINN, and the vanishing stacked-residual PINN each consist of three stacked networks, with three hidden layers and 40 neurons per network. For fairness, the vanilla PINN and PINN with adaptive artificial viscosity consist of nine hidden layers, each with 40 neurons. As shown in Fig.~\ref{fig:boxplot}, although multiple shocks are present, the proposed vanishing stacked-residual PINN achieves a lower median error over test data. The accuracy of the proposed method can be seen from Fig.~\ref{fig:numerical_simulation}, where both the speed and shape of the propagating shocks are correctly caught. More specifically, Fig.~\ref{fig:residual_block} shows that the first residual block $\alpha_1 \mathcal{N}(\cdot, \theta_1)$ is far from being zero, indicating that it is grasping more shocks and sharpening the existing ones from the previous block.

These results indicate that progressively stacking networks enhances reconstruction accuracy, while residual correction networks further stabilize learning, leading to lower errors and reduced variability in state estimations.

\begin{figure}[!t]
     \centering
     \begin{subfigure}[b]{0.5\textwidth}
         \centering
         \includegraphics[width=0.9\textwidth]{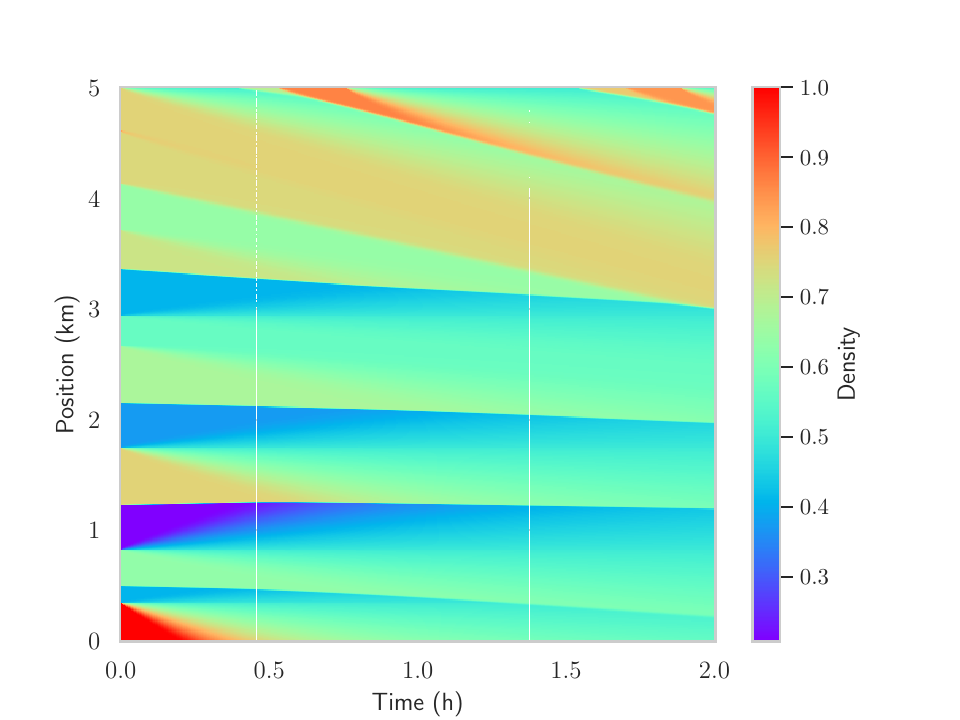}
         \caption{$u$}
         \label{fig:y equals x}
     \end{subfigure}
     \begin{subfigure}[b]{0.5\textwidth}
         \centering
         \includegraphics[width=0.9\textwidth]{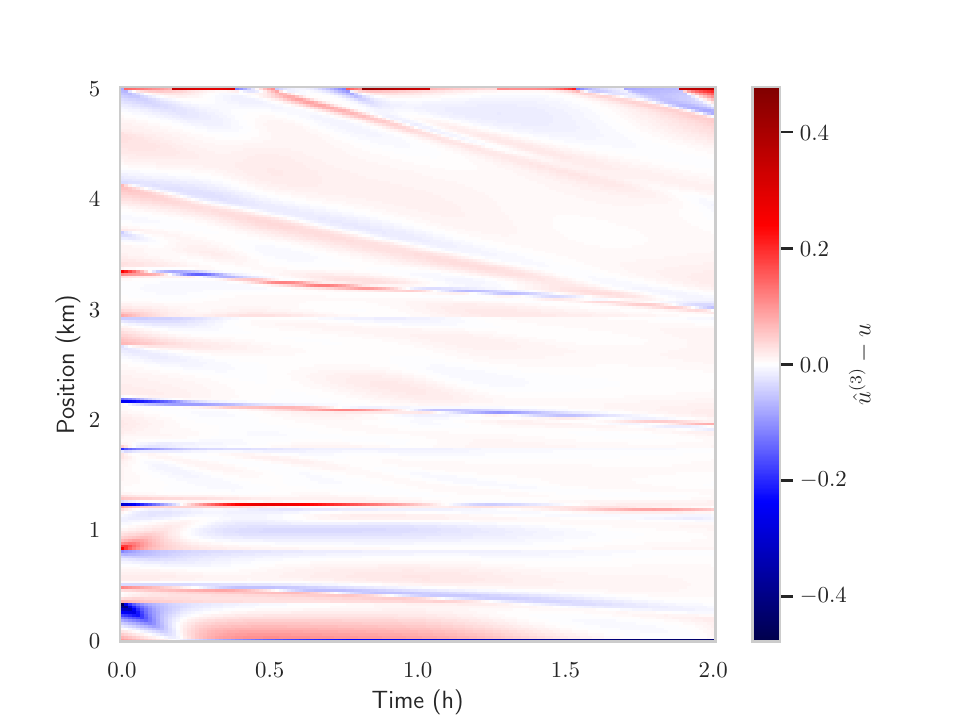}
         \caption{$\hat{u}^{(3)}-u$}
         \label{fig:three sin x}
     \end{subfigure}
    \caption{Numerical examples on the state reconstruction.}
    \label{fig:numerical_simulation}
\end{figure}

\begin{figure}[!t]
    \centering
    \vspace{-1mm}    \includegraphics[width=0.45\textwidth]{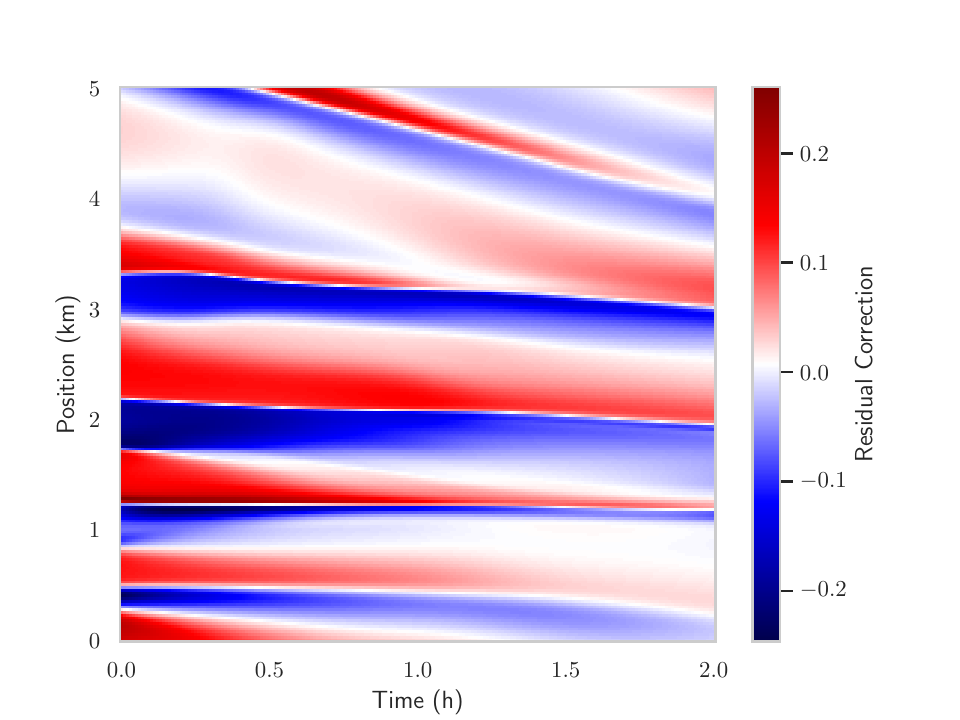}
    \caption{Value of the residual block $\alpha_1 \mathcal{N}(\cdot; \theta_1)$.}
    \label{fig:residual_block}
\end{figure}


\section{Conclusion and future works}

This paper introduced the vanishing stacked-residual PINN, a variation of the PINN framework designed to reconstruct states of hyperbolic PDEs. By adopting the vanishing viscosity principle, the approach progressively enhances solution approximation, effectively capturing discontinuities and shocks. Applied to a traffic state reconstruction problem, the method demonstrated an order of magnitude improvement in accuracy over the vanilla PINN.

Future work will focus on extending the methodology to higher-dimensional and more complex hyperbolic PDEs, aiming to confirm its broader applicability, particularly in control engineering.


\bibliographystyle{IEEEtran}
\bibliography{biblio}

\end{document}